\newcommand{\ignore}[1]{}
\def\E{\mathbb{E}}
\def\P{\mathbb{P}}
\def\ie{{\em i.e.}}
\def\d{\mathrm{d}}
\def\sir{\mathtt{SIR}}
  \newtheorem{theorem}{Theorem}
\title{EVM analysis of  an Interference Limited SIMO-SC System With Independent and Correlated Channels}
\author{\IEEEauthorblockN{Sudharsan Parthasarathy$^{(a)}$, Suman Kumar$^{(b)}$, Sheetal Kalyani$^{(a)}$}\\
\IEEEauthorblockA{
${}^{(a)}$Dept. of Electrical Eng., Indian Institute of Technology Madras, Chennai, India\\
${}^{(b)}$Dept. of Electrical Eng., Indian Institute of Technology Ropar, Ropar, India\\
\{sudharsan.p, skalyani\}@ee.iitm.ac.in, suman@iitrpr.ac.in} 
}
\begin{document}
\maketitle
\begin{abstract} 
In this paper, we derive the error vector magnitude (EVM) in  a selection combining (SC) system experiencing co-channel interference, for arbitrary number of antennas and interferers  when all the channels experience Rayleigh fading. We use a novel approach that uses the CCDF of SIR to derive EVM as using the conventional approach to derive EVM for a SC system is difficult. Considering two selection rules based on (a) maximum signal power (b) maximum signal to interference ratio, we observe that EVM is worse when maximum signal power based rule is used. Further, EVM is also derived considering (a) all the channels to be independent and (b) channels to be correlated due to insufficient antenna spacing at receiver. For some special cases, EVM is also derived when the desired channels experience Nakagami-$m$ fading.
\end{abstract}
\section{Introduction}
\label{intro}  
Error vector magnitude (EVM) is an alternate performance metric along with bit error rate (BER), coverage probability, throughput etc. for analyzing the performance of wireless systems and is being increasingly used in industry and academia  \cite{ wu2017error, ni2017minimizing}. System impairments such as local oscillator frequency error, phase noise, IQ imbalance, non-linearity etc. have been identified using EVM  \cite{Georgiadis2004}. EVM-based specifications have also been accepted as a part of the IEEE 802.11 family of Wireless Local Area
Network (WLAN)  and Wideband Code Division Multiple Access (W-
CDMA) standards \cite{thomaserror}. The performance of wireless systems are mainly limited by fading and interference. Both the limitations can be mitigated by employing multiple antennas at the receiver. Selection combining (SC) is a simple low complexity receiver diversity scheme. In a SC system, the antenna corresponding to the maximum signal power is chosen in the absence of interference. In the presence of interference, the antenna can be chosen either based on maximizing the signal to interference ratio (SIR) or only signal power \cite{yang2006performance}. Considering both these scenarios, we have derived and studied the EVM in a SC system.

EVM was first derived for digital communication systems in \cite{Gharaibeh2004}.  In \cite{Mahmoud2009}, EVM for a wireless SISO system with Rayleigh fading was derived without considering interference.  EVM for a single-input-multiple-output (SIMO)  optimal maximal ratio combining  without co-channel interference was obtained in \cite{thomaserror}. EVM in a single-input-single-output system with co-channel interference was first obtained in \cite{sudharsanevm}. To the best of our knowledge, there has been no work in literature that has derived EVM in a SIMO-SC system even in the absence of interference.  In this work, we have derived the EVM in a SIMO-SC interference limited system with arbitrary number of antennas and interferers when all the channels experience independent and identically distributed (i.i.d) Rayleigh fading. The analysis is done for both maximum signal power and maximum signal to interference ratio (SIR) based SC rules and we observe SIR based approach gives lower EVM as expected. We employ a novel approach that uses the complementary cumulative distribution function (CCDF) of SIR to derive EVM, since the conventional approach to derive EVM for a SC system is difficult. The derived expressions are in terms of elementary functions and hence can be easily evaluated.  We have also extended the derivation for the case when the desired channel experiences Nakagami-$m$ fading.

In practice, wireless channels are not independent as assumed. One of the reasons for channels to be correlated is the insufficient antenna spacing at the receiver. A thumb of rule in antenna design is that the distance between antennas at a receiver should be atleast half the wavelength for the channels observed at antennas to be independent. Despite decrease in wavelength due to mm-wave, with the advent of massive-MIMO technology, it will be difficult to maintain enough spacing between antennas for channels observed at antennas to be independent. This has motivated us to derive EVM for some special cases considering correlated channels. We observe that EVM increases due to correlation among channels.

\section{System Model}
We consider a SIMO interference limited system where all the transmitters have one antenna each, receiver has $L$ antennas. There are $M$ interferers considered. The symbol at $l$-th receive antenna at $i$-th time instant is \[y_l(i)=h_{l,0} D_0(i)+ \sum\limits_{j=1}^M  h_{l,j} I_j(i)\] where $h_{l,0}$ is the desired channel response between the transmit and $l$-th receive antenna, $D_0(i)$ is the desired symbol transmitted at $i$-th time slot, $h_{l,j}$ is the $j$-th interfering channel's response between the transmit and $l$-th receive antenna, $I_j(i)$ is the symbol from $j$-th antenna transmitted at $i$-th time slot.  There are $N$ slots in a block, $i=1,2,..N$ for which the channels are constant. We assume the symbols transmitted are of average unit energy and zero mean.   We denote the desired channel corresponding to the selected antenna as $h_0^{\prime}$ and the $j$-th interfering channel corresponding to the selected antenna is $h_j^{\prime}$. Let $y^{\prime}(i)$ be the output of the strongest receive antenna in the $i$-th time slot, \ie, $y^{\prime}(i)=h_0^{\prime} D_0(i)+ \sum\limits_{j=1}^M   h_j^{\prime} I_j(i).$ When the symbols are of average unit energy, the EVM   is defined in \cite{thomaserror} as
\ifCLASSOPTIONtwocolumn
\[\text{EVM}=\E \left(\sqrt{\frac{ \sum\limits_{i=1}^N |\frac{y^{\prime}(i)}{h_{0}^{\prime}} -D_0(i)|^2 }{N}}  \right).\] 
\else
\[\text{EVM}=\E \left(\sqrt{\frac{ \sum\limits_{i=1}^N |\frac{y^{\prime}(i)}{h_{0}^{\prime}} -D_0(i)|^2 }{N}}  \right).\]
\fi
Substituting for $y^{\prime}(i)$, and following the same steps as equations (3)-(5) in \cite{sudharsanevm},
\begin{equation}
\text{EVM}= \E \left( \sqrt{\frac{\sum\limits_{j=1}^M  |h_j^{\prime}|^2 }{|h_0^{\prime}|^2}   } \right).
\label{eqn:evm}
\end{equation}
\section{EVM for Independent channels}
\subsection{Rayleigh faded channels}
In this Section, we will derive the EVM when all the channels experience i.i.d. Rayleigh fading. First, we will derive the EVM when selection rule is based on maximum SIR.
\begin{theorem}
\label{thm:sir}
In an interference limited SIMO-SC system of $L$ receive antennas with $M$ interferers where selection combining rule is based on maximum SIR and all the channels experience i.i.d Rayleigh fading of unit mean power, EVM is given by
\[   \sqrt{\pi} \sum\limits_{k=1}^L (-1)^{k-1} \binom{L}{k} \frac{\Gamma(0.5+k M)}{\Gamma(k  M)}. \]
\end{theorem}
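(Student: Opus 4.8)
The plan is to read the EVM expression in \eqref{eqn:evm} as a fractional negative moment of the selected-antenna SIR. Writing $\gamma_l = |h_{l,0}|^2/\sum_{j=1}^M |h_{l,j}|^2$ for the per-antenna SIR, the maximum-SIR rule selects the antenna achieving $\sir' := \max_{1\le l\le L}\gamma_l$, and the quantity inside \eqref{eqn:evm} is exactly $(\sir')^{-1/2}$. Hence $\text{EVM}=\E\big[(\sir')^{-1/2}\big]$, and the whole problem reduces to finding the law of $\sir'$ (the maximum of $L$ i.i.d.\ per-antenna SIRs) and then integrating $t^{-1/2}$ against it.

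First I would determine the CCDF of a single $\gamma_l$. Under i.i.d.\ unit-power Rayleigh fading, $|h_{l,0}|^2$ is a unit-mean exponential and $\sum_{j=1}^M|h_{l,j}|^2$ is $\mathrm{Gamma}(M,1)$, independent of the numerator. Conditioning on the denominator and using the Laplace transform of the Gamma density gives the clean closed form $\P(\gamma_l>t)=(1+t)^{-M}$. Because the $L$ antennas are independent, the CDF of $\sir'$ is $[1-(1+t)^{-M}]^L$, and a binomial expansion followed by differentiation yields the PDF $f(t)=\sum_{k=1}^{L}\binom{L}{k}(-1)^{k-1}\,kM\,(1+t)^{-kM-1}$.

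Next I would compute $\E\big[(\sir')^{-1/2}\big]=\int_0^\infty t^{-1/2}f(t)\,\d t$. Interchanging sum and integral, each term is the Beta integral $\int_0^\infty t^{-1/2}(1+t)^{-kM-1}\,\d t = B(1/2,\,kM+1/2)=\sqrt{\pi}\,\Gamma(kM+1/2)/\Gamma(kM+1)$. Using $\Gamma(kM+1)=kM\,\Gamma(kM)$, the prefactor $kM$ cancels, leaving precisely $\sqrt{\pi}\sum_{k=1}^{L}(-1)^{k-1}\binom{L}{k}\Gamma(0.5+kM)/\Gamma(kM)$, as claimed.

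The main obstacle is the first step: correctly identifying that the ratio of an exponential to an independent $\mathrm{Gamma}(M,1)$ produces the tidy CCDF $(1+t)^{-M}$, and justifying that the maximum-SIR selection makes the selected-antenna SIR genuinely the maximum of $L$ i.i.d.\ copies, so that the order-statistic (binomial) expansion is valid. Once the CCDF is in hand, the remaining evaluation is a routine Beta-function computation; the only point to check is convergence of each integral near $0$ and $\infty$, which holds since $kM\ge 1$.
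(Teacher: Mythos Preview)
Your proposal is correct and shares the same skeleton as the paper: recognize $\text{EVM}=\E[(\sir')^{-1/2}]$, obtain the per-antenna CCDF $\P(\gamma_l>t)=(1+t)^{-M}$ from the exponential/Gamma ratio, use independence to write $F_{\sir'}(t)=\big[1-(1+t)^{-M}\big]^L$, and expand binomially.

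The one genuine difference is in the final integration. The paper explicitly sets aside the PDF route \eqref{eqn:evm1}--\eqref{eqn:fSIRprime} as ``difficult'' and instead invokes the tail formula $\E(X)=\int_0^\infty \P(X>x)\,\d x$ to write $\text{EVM}=\int_0^\infty F_{\sir'}(x^{-2})\,\d x$, then hands the resulting integral to Mathematica. You take precisely the route the paper avoids: differentiate the binomial expansion to get the PDF and integrate $t^{-1/2}f(t)$ term by term, recognizing each piece as the Beta integral $\int_0^\infty t^{-1/2}(1+t)^{-kM-1}\,\d t=B(\tfrac12,kM+\tfrac12)$, after which the factor $kM$ cancels against $\Gamma(kM+1)=kM\,\Gamma(kM)$. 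Your approach is arguably more transparent, since it produces the closed form by hand rather than by symbolic software, and it shows that the PDF method is not in fact harder once the Beta identity is spotted. The paper's tail-formula approach, on the other hand, avoids differentiating altogether and works directly with the CDF, which can be convenient when only the CDF is available in closed form (as in the correlated cases treated later in the paper).
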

\begin{proof}
Let $\text{SIR}^{\prime}=\frac{|h_{0}^{\prime}|^2 }{\sum\limits_{j=1}^M |h_j^{\prime}|^2}$. From \eqref{eqn:evm},
\begin{equation}
\text{EVM}=\int\limits_0^{\infty} \frac{1}{\sqrt{x}} f_{\text{SIR}^{\prime}}(x) \d x.
\label{eqn:evm1}
\end{equation}

In maximizing SIR SC rule, $\text{SIR}^{\prime}$ is the maximum of SIRs in all antennas, \ie, $\text{SIR}^{\prime}$=$\max(\text{SIR}_1,...,\text{SIR}_L)$. The cumulative distribution function (CDF) is
\begin{align}
F_{\text{SIR}^{\prime}}(x)
&= \P(\max(\text{SIR}_1,\cdots,\text{SIR}_L) \leq x) \nonumber\\
&\stackrel{(a)}= \prod\limits_{i=1}^L F_{\text{SIR}_i}(x) ,
\label{eqn:FSIRprime}
\end{align}
where $F_{\text{SIR}_i}(x) = \P(\text{SIR}_i \leq x)$ and  $(a)$ is due to independence of channels. 
PDF of $\text{SIR}^{\prime}$ can be derived as 
\begin{align}
f_{\text{SIR}^{\prime}}(x) &= \frac{\partial}{\partial x} F_{\text{SIR}^{\prime}}(x) =\sum\limits_{i=1}^L f_{\text{SIR}_i}(x) \prod\limits_{j=1 \neq i}^L F_{\text{SIR}_j}(x).
\label{eqn:fSIRprime}
\end{align}
To compute EVM, \eqref{eqn:fSIRprime} will be substituted in \eqref{eqn:evm1}. But it is difficult to analytically compute EVM for arbitrary number of interferers and antennas using this approach. Hence we use a slightly different approach to compute EVM.
From \eqref{eqn:evm1}, $ \text{EVM} =\E \left(\sqrt{\frac{1}{\text{SIR}^{\prime}}} \right).$
For a positive random variable $X$, $\E(X)$=$\int\limits_0^{\infty} \P(X>x) \d x.$
Hence
\begin{align}
\text{EVM} &= \int\limits_0^{\infty} \P \left(\sqrt{\frac{1}{\text{SIR}^{\prime}}}>x \right) \d x = \int\limits_0^{\infty} F_{\text{SIR}^{\prime}}(x^{-2}) \d x.
\label{eqn:EVM_sirprime}
\end{align}
We first compute $F_{\text{SIR}^{\prime}}(x)$ in \eqref{eqn:FSIRprime}.
Let $\text{SIR}_i$=$\frac{X}{Y}$, where $X$=$|h_{i,0}|^2$, $Y$=$\sum\limits_{j=1}^M |h_{i,j}|^2$.
\begin{align}
F_{\text{SIR}_i}(x) &= \P \left(\frac{X}{Y} \leq x \right) = \int\limits_0^{\infty} \P(X \leq x y) f_Y(y) \d y.
\label{eqn:FSIRi}
\end{align}
We assume all the channels to be of unit mean power. Note that if the desired channel is Rayleigh faded, desired channel power $X$ is exponential distributed. Similarly, if the $M$ interfering channels are i.i.d. Rayleigh faded of unit mean power  then the total interference power $Y$ is Gamma distributed of shape and scale parameters $ M$, $1$, respectively.  Thus, $F_X(x)$ is $1-e^{-x}$ and $f_Y(y)$ is $\frac{  y^{ M-1} e^{-y }}{\Gamma( M)}$. Hence from \eqref{eqn:FSIRi},
\begin{align*}
F_{\text{SIR}_i}(x) &= \int\limits_0^{\infty} (1-e^{-x y}) \frac{ y^{ M-1} e^{-y }}{\Gamma(M)} \d y.\\
&= 1-\frac{1}{(1+x)^{ M}}.
\end{align*}
Thus, from \eqref{eqn:FSIRprime},
\begin{equation}
F_{\text{SIR}^{\prime}}(x)= \left(1-\frac{1}{(1+x)^{ M}} \right)^L. 
\label{eqn:FSIRprime1}
\end{equation}
By using Binomial theorem expansion formula $(a+b)^n$=$\sum\limits_{k=0}^n \binom{n}{k} a^k b^{n-k}$, 
\begin{equation}
F_{\text{SIR}^{\prime}}(x) = \sum\limits_{k=0}^L \binom{L}{k} (-1)^k (1+x)^{-kM}.  
\label{eqn:FSIRprime2}
\end{equation}
Substituting \eqref{eqn:FSIRprime2}  in \eqref{eqn:EVM_sirprime} and using Mathematica, EVM is derived.

\end{proof}
Next we will derive the EVM when signal power maximization is the selection combining rule.
\begin{theorem}
\label{thm:signal}
In an interference limited SIMO-SC system of $L$ receive antennas with $M$ interferers where selection combining rule is based on maximum signal power and all the channels experience i.i.d Rayleigh fading of unit mean power, EVM is given by
\[L \sum\limits_{n=0}^{L-1}  	\dbinom{L-1}{L-1-n} (-1)^{n}  \sqrt{\frac{\pi}{n+1}} \frac{ \Gamma(M+0.5)}{ \Gamma( M)}.\]
\end{theorem}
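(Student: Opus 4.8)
The plan is to start from the EVM expression \eqref{eqn:evm} and exploit the fact that, under the maximum signal power rule, the selected antenna index depends \emph{only} on the desired channel powers. Writing $W \equiv |h_0'|^2 = \max_{1\le i\le L} |h_{i,0}|^2$ for the selected desired power and $Z \equiv \sum_{j=1}^M |h_j'|^2$ for the interference power seen at the selected antenna, \eqref{eqn:evm} reads $\text{EVM} = \E\big(\sqrt{Z}/\sqrt{W}\big)$. The goal is then to factorize this single expectation into a product of two one-dimensional integrals.

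The key observation, which I expect to be the conceptual crux of the argument, is that $Z$ and $W$ are \emph{independent} and that $Z$ is Gamma distributed with shape $M$ and scale $1$ regardless of which antenna is selected. To see this I would condition on the desired channels $\{h_{i,0}\}_{i=1}^L$: these fix both $W$ and the winning index $i^\star$. Since the interfering channels $\{h_{i,j}\}$ are independent of the desired channels, conditioned on $\{i^\star = i\}$ the sum $\sum_{j=1}^M |h_{i,j}|^2$ is Gamma$(M,1)$ distributed \emph{for every} $i$; hence its conditional law does not depend on $(i^\star,W)$, which simultaneously yields independence of $Z$ and $W$ and the unconditional law $Z\sim\text{Gamma}(M,1)$. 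This decoupling is precisely what makes the signal-power rule tractable, and it is the structural contrast with the SIR rule of Theorem~\ref{thm:sir}, where the selection couples signal and interference. Consequently $\text{EVM} = \E(\sqrt{Z})\,\E(1/\sqrt{W})$.

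The two factors are then elementary. For the first, $\E(\sqrt{Z}) = \frac{1}{\Gamma(M)}\int_0^\infty z^{M-1/2} e^{-z}\,\d z = \Gamma(M+0.5)/\Gamma(M)$. For the second, the maximum of $L$ i.i.d. unit-mean exponentials has density $f_W(w) = L(1-e^{-w})^{L-1} e^{-w}$; expanding $(1-e^{-w})^{L-1}$ by the binomial theorem and integrating term by term via $\int_0^\infty w^{-1/2} e^{-(n+1)w}\,\d w = \sqrt{\pi/(n+1)}$ gives $\E(1/\sqrt{W}) = L\sum_{n=0}^{L-1}\binom{L-1}{n}(-1)^n\sqrt{\pi/(n+1)}$. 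Multiplying the two factors and using the symmetry $\binom{L-1}{n}=\binom{L-1}{L-1-n}$ yields the stated formula. The only genuine subtlety is the independence/invariance claim of the second paragraph; once that is in place the remaining computations are routine Gamma integrals.
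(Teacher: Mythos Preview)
Your proposal is correct and follows essentially the same route as the paper: factor the EVM into $\E(\sqrt{Z})\cdot\E(1/\sqrt{W})$, evaluate the interference factor as a Gamma moment, and evaluate the signal factor by binomially expanding the density of the maximum of $L$ i.i.d.\ exponentials. The only notable difference is that you spell out the conditioning argument establishing independence of $Z$ and $W$ and the $\text{Gamma}(M,1)$ law of $Z$, whereas the paper simply writes the factorized double integral $\int\!\!\int \sqrt{y/x}\,f_I(y)f_{g_0'}(x)\,\d x\,\d y$ without further comment; your version is the more careful one on this point.
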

\begin{proof}
Let $g_j^{\prime}$=$|h_j^{\prime}|^2$, $g_{l,j}=|h_{l,j}|^2$.
In selection rule based on maximum signal power, $|h_0^{\prime}|^2$ in \eqref{eqn:evm} is the maximum of desired signal powers received in the $L$ antennas, \ie, $g_0^{\prime}$=$\max(g_{1,0},\cdots, g_{L,0})$. Also, $g_j^{\prime}$=$|h_j^{\prime}|^2$ $\forall j \neq 0$ in \eqref{eqn:evm} is the interfering power from the $j$-th interferer in the antenna corresponding to the maximum signal power. Hence from \eqref{eqn:evm},
\begin{equation}
 \text{EVM} = \int\limits_0^{\infty} \int\limits_0^{\infty} \sqrt{\frac{y}{x}} f_I(y) f_{g_0^{\prime}}(x) \d x \d y,
 \label{eqn:EVM_signalmax}
 \end{equation}
where $I=\sum\limits_{j=1}^M g_j^{\prime}$.
To derive $f_{g_0^{\prime}}(x)$, we will first derive the CDF of $g_0^{\prime}$. Note,
\begin{equation}
 F_{g_0^{\prime}}( x) = F(g_{1,0} \leq x, \cdots, g_{L,0} \leq x). 
 \label{eqn:cdf_g0prime}
 \end{equation}
As the desired channels experience i.i.d. Rayleigh fading of unit mean power, $F_{g_{l,0}} (x)=1-e^{-x}$, $\forall$ $l=1,\cdots,L$. Thus, $F_{g_0^{\prime}}(x) =(1-e^{-x})^L$, $f_{g_0^{\prime}}(x)  \overset{\Delta}{=} \frac{\partial}{\partial x} F_{g_0^{\prime}}(x)=L(1-e^{-x})^{L-1} e^{-x}$ and
\begin{align}
\int\limits_0^{\infty} \sqrt{\frac{1}{x}} f_{g_0^{\prime}}(x) \d x &= \int\limits_0^{\infty} \sqrt{\frac{1}{x}} L(1-e^{-x})^{L-1} e^{-x} \d x \nonumber \\
&\stackrel{(a)}=  L \int\limits_{0}^{\infty}\sqrt{\frac{1}{x }}  \sum\limits_{n=0}^{L-1}  \frac{(-1)^{n}  	\dbinom{L-1}{L-1-n}}{ (e^{x})^{n+1}} \d x \nonumber \\
&\stackrel{(b)}= L \sum\limits_{n=0}^{L-1}  	\dbinom{L-1}{L-1-n} (-1)^{n}  \sqrt{\frac{\pi}{n+1}},
\label{eqn:evmx}
\end{align}
(a) as $(1-e^{-x})^{L-1}=\sum\limits_{n=0}^{L-1}  	\dbinom{L-1}{L-1-n} (-e^{-x})^{n},$ (b) using the identity $\int\limits_{0}^{\infty} \frac{(e^{-x})^{n+1}}{\sqrt{x}} \d x = \sqrt{\frac{\pi}{n+1}}$. 
As the interferers are i.i.d. Nakagami-$m$ faded of unit mean power, probability density function (pdf) of $I$=$\sum\limits_{j=1}^M g_j^{\prime}$ is
$f_{I}(y)= \frac{y^{M-1} e^{-y}}{\Gamma( M)}.$
Hence 
\begin{align}
 \int\limits_0^{\infty} \sqrt{y} f_I(y) \d y &= \frac{1}{\Gamma(M)}   \int\limits_0^{\infty} e^{-y} y^{ M-0.5} \d y \nonumber  \\ 
 &=\frac{ \Gamma(M+0.5)}{ \Gamma( M)} .
 \label{eqn:evmy}
\end{align}
Multiplying \eqref{eqn:evmx} and \eqref{eqn:evmy}, EVM is obtained.
\end{proof}
Next we derive EVM for some special cases when the desired channel experiences Nakagami-$m$ fading. 
\subsection{Desired channel is Nakagami-$m$ faded}
First we derive below the EVM for selection rule based on maximum SIR.
\begin{theorem}
\label{thm:nakagami_sir}
In an interference limited SIMO-SC system of $L$ receive antennas where selection combining rule is based on maximum SIR with $L$ desired channels experiencing Nakagami-$m$ fading and channels from 2 interferers experiencing i.i.d. Rayleigh fading , EVM is given by
\begin{equation*}
\frac{\Gamma(-\frac{1}{2}+L+L m_d) {}_2F_1(\frac{1}{2},-L,\frac{3}{2}-L(1+m_d),1+m_d)}{2 (m_d \pi)^{-0.5} \Gamma(L+L m_d)  }+ 
 \end{equation*}
\begin{equation*}
\frac{ \Gamma(\frac{1}{2}-L (1+m_d)) \Gamma(L m_d-\frac{1}{2}) }{2 \Gamma[-L]  (m_d )^{-0.5} (1+m_d)^{0.5-L-L m_d} } \times
\end{equation*}
\begin{equation*}
{}_2F_1(L(1+m_d),L m_d-\frac{1}{2},\frac{1}{2}+L(1+m_d) ,1+m_d)
\end{equation*}
\end{theorem}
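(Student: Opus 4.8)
The plan is to reuse the CCDF machinery of Theorem~\ref{thm:sir}. Since the rule maximizes SIR and the $L$ branches are i.i.d., $F_{\text{SIR}^{\prime}}(x)=\prod_{i=1}^L F_{\text{SIR}_i}(x)=\big[F_{\text{SIR}_i}(x)\big]^L$ by \eqref{eqn:FSIRprime}, and by \eqref{eqn:EVM_sirprime}, $\text{EVM}=\int_0^\infty F_{\text{SIR}^{\prime}}(x^{-2})\,\d x$. The only genuinely new ingredient is the per-branch CDF $F_{\text{SIR}_i}$ for a Nakagami-$m$ desired channel against two Rayleigh interferers. With unit mean power, the desired power $X=|h_{i,0}|^2$ is Gamma with shape $m_d$ and scale $1/m_d$, so $f_X(t)=\frac{m_d^{m_d}}{\Gamma(m_d)}t^{m_d-1}e^{-m_d t}$, while the total interference $Y=\sum_{j=1}^2|h_{i,j}|^2$ is Gamma$(2,1)$, i.e. $f_Y(y)=ye^{-y}$. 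The fact that there are exactly two interferers --- giving an integer interference shape --- is what keeps $F_{\text{SIR}_i}$ elementary.

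First I would compute $F_{\text{SIR}_i}(x)=\int_0^\infty F_X(xy)f_Y(y)\,\d y$ exactly as in \eqref{eqn:FSIRi}. Integrating by parts with $\d v=ye^{-y}\,\d y$, $v=-(1+y)e^{-y}$ kills both boundary terms and reduces the problem to the two elementary Gamma integrals $\int_0^\infty y^{m_d-1}e^{-(1+m_d x)y}\,\d y$ and $\int_0^\infty y^{m_d}e^{-(1+m_d x)y}\,\d y$. Collecting them yields the closed form
\[
F_{\text{SIR}_i}(x)=\frac{(m_d x)^{m_d}\,(1+m_d+m_d x)}{(1+m_d x)^{m_d+1}},
\]
which one can sanity-check against Theorem~\ref{thm:sir}: at $m_d=1$ it collapses to $1-(1+x)^{-2}$, the $M=2$ Rayleigh case.

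Next I would substitute $t=x^{-2}$ in \eqref{eqn:EVM_sirprime}, turning EVM into the single integral
\[
\text{EVM}=\frac{m_d^{L m_d}}{2}\int_0^\infty t^{L m_d-\frac32}\,\frac{(1+m_d+m_d t)^L}{(1+m_d t)^{L(m_d+1)}}\,\d t,
\]
which converges precisely when $L m_d>\tfrac12$ (the $t\to0$ endpoint; the $t\to\infty$ tail behaves like $t^{-3/2}$ and is harmless). The cleanest evaluation expands $(1+m_d+m_d t)^L=\sum_{k=0}^L\binom{L}{k}m_d^k(1+m_d t)^{L-k}$ and integrates term by term via $\int_0^\infty t^{\alpha-1}(1+m_d t)^{-\beta}\,\d t=m_d^{-\alpha}\Gamma(\alpha)\Gamma(\beta-\alpha)/\Gamma(\beta)$, producing the finite sum $\tfrac{\sqrt{m_d}}{2}\Gamma(L m_d-\tfrac12)\sum_{k=0}^L\binom{L}{k}m_d^k\Gamma(k+\tfrac12)/\Gamma(k+L m_d)$. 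Equivalently --- and this is how the stated form arises --- one leaves the two linear factors unexpanded and applies a standard Euler-type integral table (or Mathematica), which returns a combination of two Gauss hypergeometric functions evaluated at the ratio of the roots of $(1+m_d t)$ and $(1+m_d+m_d t)$, namely $1+m_d$.

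The main obstacle is reconciling this raw output with the claimed two-summand expression, rather than any analytic difficulty. I would note that the second summand carries a factor $1/\Gamma(-L)$, which vanishes for every positive integer $L$ while its other factors stay finite, so that term is identically zero; and that the first summand's ${}_2F_1(\tfrac12,-L;\tfrac32-L(1+m_d);1+m_d)$ terminates (its parameter $-L$ is a negative integer), i.e. it is exactly the degree-$L$ polynomial equal to the finite sum above. A quick $L=1$ check confirms this: the first summand reduces to $\tfrac34\sqrt{\pi m_d}\,\Gamma(m_d-\tfrac12)/\Gamma(m_d)$, which is just $\E[\sqrt Y]\,\E[1/\sqrt X]$ for the $\text{Gamma}(2,1)$ and $\text{Gamma}(m_d,1/m_d)$ moments. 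Thus the whole statement is driven by the single nontrivial step --- the elementary evaluation of $F_{\text{SIR}_i}$ --- after which the final integral is mechanical.
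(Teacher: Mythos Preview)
Your proposal is correct and follows essentially the same route as the paper: compute the per-branch CDF $F_{\text{SIR}_i}(x)=\dfrac{(m_d x)^{m_d}(1+m_d+m_d x)}{(1+m_d x)^{1+m_d}}$ from \eqref{eqn:FSIRi}, raise it to the $L$-th power via \eqref{eqn:FSIRprime}, and plug into \eqref{eqn:EVM_sirprime}. The paper stops at ``substituting \eqref{eqn:fSIRprime_rayl} in \eqref{eqn:EVM_sirprime}, EVM is derived,'' whereas you additionally supply the explicit integration-by-parts derivation of $F_{\text{SIR}_i}$, the change of variable $t=x^{-2}$, an equivalent finite-sum form, and the useful observation that the second summand (with the $1/\Gamma(-L)$ factor) vanishes for integer $L$ so the stated result is really a terminating polynomial --- all of which is extra clarity rather than a different method.
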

\begin{proof}
To compute EVM in \eqref{eqn:EVM_sirprime}, we need $F_{\text{SIR}^{\prime}}(x)$ in \eqref{eqn:FSIRprime}, for which $F_{\text{SIR}_i}(x)$ is derived first.  
We assume all the channels to be of unit mean power. Note that if the desired channel is Nakagami-$m$ faded, desired channel power $X$ is Gamma distributed of shape parameter $m_d$. Similarly, if the 2 interfering channels are i.i.d. Rayleigh faded of unit mean power then the total interference power $Y$ is Gamma distributed of shape and scale parameters $2$, $1$ respectively.  Thus, $F_X(x)$ is $1-\frac{\Gamma(m_d,m_d x)}{\Gamma(m_d)}$ and $f_Y(y)$ is $  y e^{-y }$. Hence from \eqref{eqn:FSIRi},
\begin{align*}
F_{\text{SIR}_i}(x) &= \int\limits_0^{\infty} \left(1-\frac{\Gamma(m_d,m_d x y)}{\Gamma(m_d)} \right)y e^{-y} \d y. \\
&=\frac{(m_d x)^{m_d} (1+m_d+m_d x) }{(1+m_d x)^{1+m_d}}.
\end{align*}
Thus, from \eqref{eqn:FSIRprime},
\begin{equation}
F_{\text{SIR}^{\prime}}(x) = \left(\frac{(m_d x)^{m_d} (1+m_d+m_d x) }{(1+m_d x)^{1+m_d}}\right)^L.
\label{eqn:fSIRprime_rayl}
\end{equation}
Substituting \eqref{eqn:fSIRprime_rayl} in \eqref{eqn:EVM_sirprime}, EVM is derived. 
\end{proof}
\hspace{-5mm} Now we derive EVM when signal power maximization is used. 
\begin{theorem}
\label{thm:signal_rayl}
In an interference limited SIMO-SC system of 2 receive antennas where selection combining rule is based on maximum signal power with the 2 desired channels experiencing Nakagami-$m$ fading and channels from $M$ interferers experiencing i.i.d. Rayleigh fading , EVM is given by
\[  \frac{2  \Gamma(m_d-\frac{1}{2}) \left(1- \frac{{}_2F_1(m_d-\frac{1}{2},2 m_d-\frac{1}{2}, m_d+\frac{1}{2},-1)}{\Gamma(m_d)\Gamma(m_d+0.5) \Gamma(2 m_d-0.5)^{-1}}\right) \Gamma( M+\frac{1}{2}) }{m_d^{-\frac{1}{2}}\Gamma(m_d) \Gamma( M) }\]
\end{theorem}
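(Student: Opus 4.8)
The plan is to reuse the product structure established in the proof of Theorem~\ref{thm:signal}. Under the maximum-signal-power rule the index of the selected antenna is determined solely by the desired powers $g_{1,0},g_{2,0}$, which are independent of every interfering channel. Since the $M$ interferers are i.i.d.\ across antennas, the total interference power $I=\sum_{j=1}^M g_j^\prime$ seen at the selected antenna has the same Gamma distribution of shape and scale $M,1$ whatever the selected index, and it is therefore independent of $g_0^\prime=\max(g_{1,0},g_{2,0})$. Consequently the EVM again factors exactly as in \eqref{eqn:EVM_signalmax}, into a product of an interference integral and a desired-signal integral.

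The interference integral carries over verbatim from Theorem~\ref{thm:signal}: because $I$ is still Gamma distributed of shape and scale $M,1$, \eqref{eqn:evmy} gives $\int_0^\infty \sqrt{y}\,f_I(y)\,\d y=\Gamma(M+\tfrac12)/\Gamma(M)$, which is precisely the factor $\Gamma(M+\tfrac12)/\Gamma(M)$ appearing in the claimed expression.

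For the desired-signal integral I would first set up the order statistic. With unit-mean-power Nakagami-$m$ desired channels each $g_{l,0}$ is Gamma distributed of shape parameter $m_d$ and scale $1/m_d$, so, as in Theorem~\ref{thm:nakagami_sir}, $F_{g_{l,0}}(x)=1-\Gamma(m_d,m_d x)/\Gamma(m_d)$ and $f_{g_{l,0}}(x)=m_d^{m_d}x^{m_d-1}e^{-m_d x}/\Gamma(m_d)$. For $L=2$, $F_{g_0^\prime}(x)=\bigl(1-\Gamma(m_d,m_d x)/\Gamma(m_d)\bigr)^2$, and differentiating yields $f_{g_0^\prime}(x)=2\bigl(1-\Gamma(m_d,m_d x)/\Gamma(m_d)\bigr)m_d^{m_d}x^{m_d-1}e^{-m_d x}/\Gamma(m_d)$. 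I would then split $\int_0^\infty x^{-1/2}f_{g_0^\prime}(x)\,\d x$ according to the bracketed factor: replacing the bracket by $1$ gives the elementary Gamma integral $2\int_0^\infty x^{-1/2}f_{g_{l,0}}(x)\,\d x=2\sqrt{m_d}\,\Gamma(m_d-\tfrac12)/\Gamma(m_d)$, which accounts for both the prefactor $2\Gamma(m_d-\tfrac12)/(m_d^{-1/2}\Gamma(m_d))$ and the leading $1$ in the parentheses, while the remaining $\Gamma(m_d,m_d x)/\Gamma(m_d)$ term supplies the subtracted correction.

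The main obstacle is evaluating the correction integral $\int_0^\infty x^{m_d-3/2}e^{-m_d x}\Gamma(m_d,m_d x)\,\d x$, i.e.\ a power times an exponential times an upper incomplete Gamma function. This is a standard tabulated integral (Gradshteyn--Ryzhik, and equally what Mathematica returns) whose value is a Gauss hypergeometric ${}_2F_1$; either directly or after a Pfaff transformation the argument becomes $-1$, producing exactly the term ${}_2F_1(m_d-\tfrac12,2m_d-\tfrac12,m_d+\tfrac12,-1)$ together with the accompanying $\Gamma(m_d),\Gamma(m_d+0.5),\Gamma(2m_d-0.5)$ factors inside the parentheses. Multiplying this desired-signal integral by the interference factor $\Gamma(M+\tfrac12)/\Gamma(M)$ then gives the stated EVM.
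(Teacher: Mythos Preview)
Your proposal is correct and follows essentially the same route as the paper: factor the EVM as in \eqref{eqn:EVM_signalmax}, take the interference factor from \eqref{eqn:evmy}, form $F_{g_0^\prime}(x)=\bigl(1-\Gamma(m_d,m_d x)/\Gamma(m_d)\bigr)^2$ for $L=2$, differentiate, and evaluate $\int_0^\infty x^{-1/2}f_{g_0^\prime}(x)\,\d x$. The paper simply states the resulting ${}_2F_1$ expression, whereas you spell out the split into the elementary Gamma piece and the tabulated $\int_0^\infty x^{m_d-3/2}e^{-m_d x}\Gamma(m_d,m_d x)\,\d x$ correction; both arrive at the same formula.
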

\begin{proof}
To derive EVM in \eqref{eqn:EVM_signalmax}, pdf of $g_0^{\prime}$ will be derived. We consider two antennas. So $g_0^{\prime}$ is max of independent channels $g_{1,0}$ and $g_{2,0}$. Both these channels are Nakagami-$m$ faded of unit mean power and shape parameter $m_d$. Hence the CDF of $g_0^{\prime}$  from \eqref{eqn:cdf_g0prime} is
$F_{g_0^{\prime}}(x)= \left(\frac{\Gamma(m_d)-\Gamma(m_d,m_d x)}{\Gamma(m_d)} \right)^2$ and the pdf is \[f_{g_0^{\prime}}(x)  \overset{\Delta}{=} \frac{\partial}{\partial x} F_{g_0^{\prime}}(x)=\frac{2(\Gamma(m_d)-\Gamma(m_d,m_d x))e^{-m_d x} x^{m_d-1}}{\Gamma(m_d)^2 m_d^{-m_d}}. \]
Thus, $\int\limits_0^{\infty} \sqrt{\frac{1}{x}} f_{g_0^{\prime}}(x) \d x$ is 
\begin{equation}
 \frac{2  \Gamma(m_d-\frac{1}{2}) }{m_d^{-\frac{1}{2}}\Gamma(m_d)} \left(1- \frac{{}_2F_1(m_d-\frac{1}{2},2 m_d-\frac{1}{2}, m_d+\frac{1}{2},-1)}{\Gamma(m_d)\Gamma(m_d+0.5) \Gamma(2 m_d-0.5)^{-1}}\right).
 \label{eqn:g0prime_rayl}
\end{equation} 
Similarly $\int\limits_0^{\infty} \sqrt{y} f_I(y) \d y$ can be found from  \eqref{eqn:evmy}. Multiplying \eqref{eqn:g0prime_rayl} and \eqref{eqn:evmy}, EVM is obtained.
\end{proof}
In the next Section, EVM is derived when the channels at receive antennas are positively correlated and all the channels experience Rayleigh fading.
\section{EVM-Correlated receive antennas}
First, we derive EVM when selection rule is based on maximum SIR.
\begin{theorem}
\label{thm:SIRmaxcorr}
EVM of an interference limited two antenna correlated system with one interferer  using maximum SIR based selection rule, when all the channels experience Rayleigh fading, is given as 
  \[\int\limits_0^{\infty}\frac{1}{1+x^{2}} - \frac{(1+x^{-1})^{-1} x^{2}(\sqrt{1-\rho^2})}{  \sqrt{1-\rho^2 + 2 (1+\rho^2) x^{2} +(1-\rho^2) x^{4}} } \d x, \]
  where $\rho$ is the correlation coefficient.
\end{theorem}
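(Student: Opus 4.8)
The plan is to follow the SIR-based template already established in \eqref{eqn:EVM_sirprime}, namely $\text{EVM}=\int_0^\infty F_{\text{SIR}^\prime}(x^{-2})\,\d x$, but with two essential modifications forced by the correlated two-antenna, single-interferer setting: the factorization in step $(a)$ of \eqref{eqn:FSIRprime} is no longer valid because the two desired channels are correlated, and with a single interferer the interference power $Y$ reduces from Gamma$(M,1)$ to a single exponential. First I would set $L=2$, $M=1$ and write $\text{SIR}_1=|h_{1,0}|^2/|h_{1,1}|^2$ and $\text{SIR}_2=|h_{2,0}|^2/|h_{2,1}|^2$, where the pairs $(h_{1,0},h_{2,0})$ and $(h_{1,1},h_{2,1})$ are each correlated Rayleigh with correlation coefficient $\rho$. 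The quantity I actually need is the joint CDF $F_{\text{SIR}^\prime}(x)=\P(\text{SIR}_1\le x,\ \text{SIR}_2\le x)$, which cannot be split into a product.

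\textbf{Computing the joint CDF.} The main work is evaluating $\P(|h_{1,0}|^2\le x|h_{1,1}|^2,\ |h_{2,0}|^2\le x|h_{2,1}|^2)$ under correlated fading. I would model the correlation through the bivariate exponential (equivalently, correlated Rayleigh) joint density for the desired powers $(|h_{1,0}|^2,|h_{2,0}|^2)$, and likewise for the interferer powers, using the standard Kibble/bivariate-Gamma form whose density involves a modified Bessel function $I_0$ of argument scaling like $2\rho\sqrt{g_1 g_2}/(1-\rho^2)$. The clean route is to condition on the two interference powers $|h_{1,1}|^2=u_1$, $|h_{2,1}|^2=u_2$, integrate the correlated desired-power density over the rectangle $\{g_1\le xu_1,\ g_2\le xu_2\}$, and then integrate out $(u_1,u_2)$ against their (also correlated) joint density. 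After the Bessel-series expansion and termwise integration, I expect the double integral to collapse to a closed form that, once differentiated and substituted into $F_{\text{SIR}^\prime}(x^{-2})$, produces the integrand displayed in the theorem --- specifically the first term $1/(1+x^2)$ (the marginal/independent-like contribution from each SIR being Beta-type with $M=1$) minus a correction term carrying the factor $\sqrt{1-\rho^2}$ and the quartic $1-\rho^2+2(1+\rho^2)x^2+(1-\rho^2)x^4$ under the radical, which is exactly the signature of the cross term coupling the two antennas.

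\textbf{Assembling EVM.} Once $F_{\text{SIR}^\prime}$ is in closed form, substituting $x\mapsto x^{-2}$ into \eqref{eqn:EVM_sirprime} and simplifying the algebra (the $(1+x^{-1})^{-1}$ factor and the rational prefactor arise from this substitution and from the $M=1$ specialization of the per-antenna CDF $1-(1+x)^{-M}$) yields the stated integral representation. Unlike Theorems~\ref{thm:sir}--\ref{thm:signal_rayl}, here the result is left in integral form because the remaining one-dimensional integral over $x$ does not reduce to elementary or standard hypergeometric functions in closed form.

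\textbf{Anticipated obstacle.} The hard part will be the correlated double integration: handling the product of two bivariate-Gamma (Kibble) densities and carrying the Bessel-$I_0$ expansion through the rectangular integration limits without losing the $\rho$-dependence, while confirming that the sum over Bessel-series terms resums into the compact radical $\sqrt{1-\rho^2+2(1+\rho^2)x^2+(1-\rho^2)x^4}$. Verifying the $\rho\to 0$ limit (which must recover $F_{\text{SIR}^\prime}(x)=(x/(1+x))^2$ from \eqref{eqn:FSIRprime1} with $L=2,M=1$, giving integrand $\to 1/(1+x^2)^2$-type behavior) provides a crucial consistency check on the otherwise error-prone correlated integration.
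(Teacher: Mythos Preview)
Your proposal is correct in its overall architecture: you start from \eqref{eqn:EVM_sirprime}, recognize that the factorization in \eqref{eqn:FSIRprime} breaks down under correlation, and aim to obtain the joint CDF $F_{\text{SIR}^\prime}(x)=\P(\text{SIR}_1\le x,\text{SIR}_2\le x)$ before substituting $x\mapsto x^{-2}$. Where you diverge from the paper is in \emph{how} that CDF is obtained. The paper does not derive it from scratch; it simply quotes the closed-form expression for $F_{\text{SIR}^\prime}(x)$ in a correlated two-antenna, single-interferer Rayleigh system from \cite{okui} (equation (6) there), plugs it into \eqref{eqn:EVM_sirprime}, and leaves the resulting one-dimensional integral for numerical evaluation. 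That is the entire proof.

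Your route---conditioning on the correlated interference powers, integrating the bivariate-exponential desired-power density over a rectangle, and resumming a Bessel-$I_0$ series into the quartic radical---is essentially a rederivation of the result in \cite{okui}. It is self-contained and would reproduce the same $F_{\text{SIR}^\prime}$, but it is considerably more work than the paper actually does, and your ``anticipated obstacle'' (the Bessel resummation) is exactly the computation the paper sidesteps by citation. What your approach buys is independence from the external reference and insight into where the $\sqrt{1-\rho^2}$ factor and the quartic under the radical originate; what the paper's approach buys is brevity. Your $\rho\to 0$ sanity check is a good idea and does recover the $L=2$, $M=1$ case of \eqref{eqn:FSIRprime1}.
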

\begin{proof}
EVM of an interference limited selection combining system where maximum SIR based selection rule is applied is given in \eqref{eqn:EVM_sirprime}.
In (6) in \cite{okui}, closed form expression for CDF of $\sir^{\prime}$ in \eqref{eqn:FSIRprime} was derived for a two antenna correlated system with one interferer when all the channels experience Rayleigh fading as
\[F_{\text{SIR}^{\prime}}(x)=\frac{1}{1+x^{-1}} - \frac{(1+x^{-1})^{-1} x^{-1}(\sqrt{1-\rho^2})}{  \sqrt{1-\rho^2 + 2 (1+\rho^2) x^{-1} +(1-\rho^2) x^{-2}} } .\]
Substituting this in \eqref{eqn:EVM_sirprime}, EVM is derived. Closed form expression cannot be obtained, but this integral can be numerically evaluated. 
\end{proof}
Next we derive EVM when maximum signal power based selection combining rule is used. 
\begin{theorem}
\label{thm:signalmaxcorr}

EVM of an interference limited two antenna correlated system with $M$ interferers  using maximum signal power based selection rule, when all the channels experience Rayleigh fading,  is given as 
\[\frac{\Gamma(M+0.5)}{\Gamma(M)}\int\limits_0^{\infty} \frac{2 e^{-x} (1-Q \left(\rho\sqrt{\frac{2x}{1-\rho^2}} , \sqrt{\frac{2x}{1-\rho^2}} \right) )}{\sqrt{x}}  \d x.\]
\end{theorem}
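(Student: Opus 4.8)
The plan is to mirror the proof of Theorem~\ref{thm:signal}, exploiting the fact that the EVM integral \eqref{eqn:EVM_signalmax} factorizes as a product of a ``signal'' integral $\int_0^\infty x^{-1/2} f_{g_0^{\prime}}(x)\,\d x$ and an ``interference'' integral $\int_0^\infty \sqrt{y}\, f_I(y)\,\d y$, since the selected antenna is determined by the desired powers alone and hence $g_0^{\prime}$ is independent of $I=\sum_{j=1}^M g_j^{\prime}$. Because the $M$ interfering channels remain i.i.d. Rayleigh of unit mean power, the interference factor is unchanged from the independent case and equals $\Gamma(M+\tfrac12)/\Gamma(M)$ by \eqref{eqn:evmy}; the appearance of exactly this prefactor in the claimed expression confirms the factorization. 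The whole burden of the proof therefore reduces to computing the signal factor for two \emph{correlated} desired channels.

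First I would write the joint density of the two correlated desired powers. For two unit-mean Rayleigh envelopes with correlation coefficient $\rho$, the powers $g_{1,0},g_{2,0}$ follow the bivariate exponential law
\[
 f(u,v)=\frac{1}{1-\rho^2}\exp\!\Big(-\frac{u+v}{1-\rho^2}\Big)\, I_0\!\Big(\frac{2\rho\sqrt{uv}}{1-\rho^2}\Big),
\]
whose marginals reduce to the unit-mean exponentials $e^{-u}$ used in \eqref{eqn:cdf_g0prime} (checked via $\int_0^\infty e^{-pv}I_0(2c\sqrt v)\,\d v=p^{-1}e^{c^2/p}$). Rather than evaluating $F_{g_0^{\prime}}(x)=\P(g_{1,0}\le x,g_{2,0}\le x)$ in closed form and then differentiating---which would force me to differentiate an integral of Marcum $Q$-functions---I would differentiate $F(x,x)=\int_0^x\!\int_0^x f(u,v)\,\d u\,\d v$ directly. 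Differentiating both upper limits and using the symmetry $f(u,v)=f(v,u)$ gives the clean intermediate form
\[
 f_{g_0^{\prime}}(x)=2\,f_{g_{1,0}}(x)\,\P(g_{2,0}\le x\mid g_{1,0}=x)=2e^{-x}\big[1-\P(g_{2,0}>x\mid g_{1,0}=x)\big].
\]

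The crux is to identify the conditional CCDF with a Marcum $Q$-function. Forming the conditional density $f(x,v)/e^{-x}$ and substituting $t=\sqrt{2v/(1-\rho^2)}$ turns it into $t\exp(-(t^2+a^2)/2)I_0(at)$ with $a=\rho\sqrt{2x/(1-\rho^2)}$, so that
\[
 \P(g_{2,0}>x\mid g_{1,0}=x)=\int_{\sqrt{2x/(1-\rho^2)}}^\infty \! t\,e^{-(t^2+a^2)/2}I_0(at)\,\d t=Q\!\Big(\rho\sqrt{\tfrac{2x}{1-\rho^2}},\,\sqrt{\tfrac{2x}{1-\rho^2}}\Big),
\]
by the definition of the Marcum $Q$-function. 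Substituting this into $f_{g_0^{\prime}}(x)$, inserting the result into the signal factor, and multiplying by the interference factor $\Gamma(M+0.5)/\Gamma(M)$ then yields the stated expression. The main obstacle is precisely this last identification: recognising the Marcum $Q$ structure after the change of variables and keeping the $\rho$-convention consistent so that the inner argument is exactly $\rho\sqrt{2x/(1-\rho^2)}$, with the exponent and Bessel argument collapsing correctly to $-(t^2+a^2)/2$ and $at$. The remaining integral over $x$ is left unevaluated, exactly as in Theorem~\ref{thm:SIRmaxcorr}.
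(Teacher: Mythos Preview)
Your proof is correct and follows the same overall structure as the paper: factorize the EVM integral \eqref{eqn:EVM_signalmax} into an interference factor (given by \eqref{eqn:evmy}) and a signal factor $\int_0^\infty x^{-1/2}f_{g_0^{\prime}}(x)\,\d x$, then multiply. The only difference is how the density $f_{g_0^{\prime}}(x)=2e^{-x}\bigl(1-Q(\rho\sqrt{2x/(1-\rho^2)},\sqrt{2x/(1-\rho^2)})\bigr)$ is obtained. The paper simply quotes this formula from \cite{chen2004distribution}, whereas you derive it from first principles: writing the bivariate exponential density, differentiating $F(x,x)$ directly to get $f_{g_0^{\prime}}(x)=2e^{-x}\bigl[1-\P(g_{2,0}>x\mid g_{1,0}=x)\bigr]$, and then identifying the conditional CCDF with a Marcum $Q$-function via the substitution $t=\sqrt{2v/(1-\rho^2)}$. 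Your route is more self-contained and makes transparent why the Marcum $Q$ arises; the paper's route is shorter but relies on an external reference for exactly that step.
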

\begin{proof}
EVM of an interference limited selection combining system where maximum signal power based selection rule is applied is given in \eqref{eqn:EVM_signalmax}. 

First we will derive $\int\limits_0^{\infty} \sqrt{y}f_I(y) \d y$.

As the $M$ interfering channels are considered to be Rayleigh faded of unit mean power, the total interferernce power is Nakagami-$m$ faded of shape and scale parameters $M$  and $1$. In \eqref{eqn:evmy}, we have already derived $\int\limits_0^{\infty} \sqrt{y}f_I(y) \d y = \frac{\Gamma(M+0.5)}{\Gamma(M)}.$
Next,we derive $\int\limits_0^{\infty} \sqrt{\frac{1}{x}}f_{g_0^{\prime}}(x) \d x$.
In  \cite{chen2004distribution}, closed form expression for  PDF of $g_0^{\prime}$ in \eqref{eqn:EVM_signalmax} was derived for a two antenna correlated system when all the channels experience Rayleigh fading as
\[f_{g_0^{\prime}}(x)=2 e^{-x} \left(1-Q \left(\rho\sqrt{\frac{2x}{1-\rho^2}} , \sqrt{\frac{2x}{1-\rho^2}} \right) \right) .\]
Hence
\begin{equation}
\int\limits_0^{\infty} \sqrt{\frac{1}{x}}f_{g_0^{\prime}}(x) \d x = \int\limits_0^{\infty} \frac{2 e^{-x} (1-Q \left(\rho\sqrt{\frac{2x}{1-\rho^2}} , \sqrt{\frac{2x}{1-\rho^2}} \right) )}{\sqrt{x}}  \d x. 
 \label{eqn:evmx_correlation}
 \end{equation}
Deriving closed form expression is not possible, but this integral can be easily evaluated numerically. 
Multiplying  \eqref{eqn:evmy} and \eqref{eqn:evmx_correlation}, EVM is derived.

\end{proof}
Fully correlated case:

When $\rho$=1, there is no selection required as both the antennas receive same signal. Hence EVM can be derived from Theorem \ref{thm:sir} or \ref{thm:signal} by substituting $L$=1. EVM thus derived is \[\text{EVM}=\sqrt{\pi} \frac{\Gamma(M+0.5)}{\Gamma(M)}.\]
By using the identity,
 \begin{equation}
\frac{\Gamma(n+a)}{\Gamma(n+b)}=\frac{(1+\frac{(a-b) (a+b-1)}{2 n} +O(\frac{1}{n^2}))}{n^{b-a}},  \text{for large n},
\label{eqn:identity}
\end{equation}
$\text{EVM} \rightarrow \sqrt{\pi M}$, for large number of interferers. 
\section{Results}

In Fig. \ref{fig:fig1}, EVM is plotted for the two interferer case when desired channel experiences Nakagami-$m$ fading, interferers experience Rayleigh fading and the antenna chosen in selection combining is based on SIR maximization. The analytical result in Theorem \ref{thm:nakagami_sir} is verified by simulation. With increase in $m_d$ (LOS component of the channel),  EVM decreases as expected. The steep decrease in EVM when number of antennas increases from one to three for $m_d$=1 shows multiple antennas combat severe fading very well and the plot also shows diminishing returns with increase in number of antennas.  

In Fig. \ref{fig:fig3}, EVM is plotted for the two antenna case when desired channel experiences Nakagami-$m$ fading where the antenna being chosen in selection combining is based on signal power maximization. The analytical results derived in Theorem \ref{thm:signal_rayl} is verified by simulation. EVM increases with decrease in $m_d$ (LOS component of desired channel) and increase in number of interferers as expected.

In Fig. \ref{fig:fig2},  EVM is plotted by varying the correlation for both SIR and signal power maximization rule. Theorem \ref{thm:SIRmaxcorr} and Theorem \ref{thm:signalmaxcorr} are verified by simulation. 
As i.i.d channels experience correlation to be 0, Theorem \ref{thm:sir} and Theorem \ref{thm:signal} are also verified by simulation in Fig. \ref{fig:fig2} at $\rho$=0. We also observe that SIR maximization performs better than signal power maximization \ie, lower EVM is obtained due to SIR maximization. We observe that EVM increases due to correlated channels at receive antennas.

\ifCLASSOPTIONtwocolumn
\begin{figure}
\includegraphics[scale=0.71]{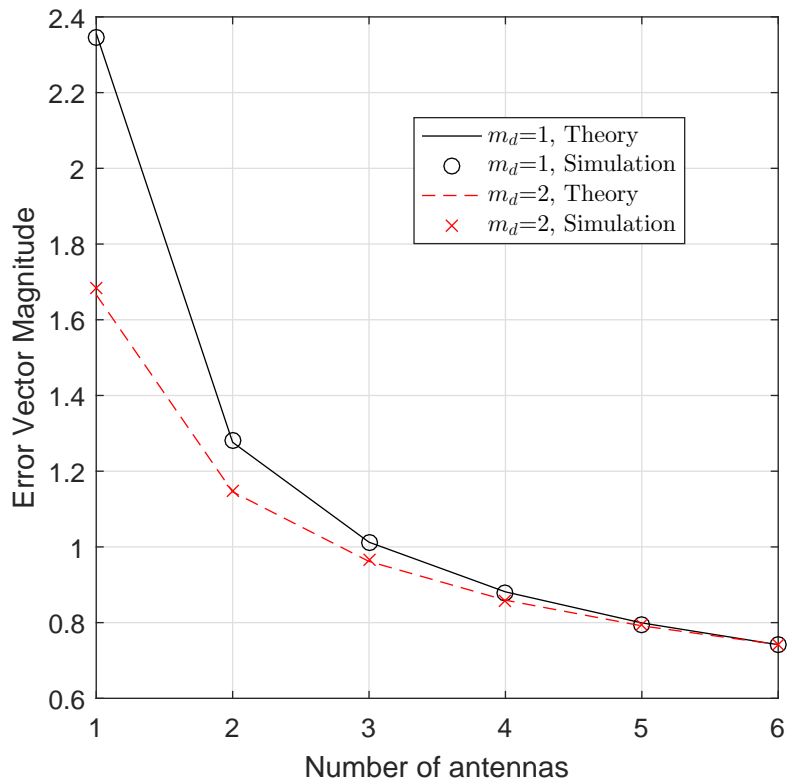}
\centering
\caption{EVM in 2 interferer case, Nakagami-$m$ fading desired channel, Rayleigh fading interferer channels, SIR max. SC rule}
\label{fig:fig1}
\end{figure}
\else
\begin{figure}
\includegraphics[scale=1]{march27.eps}
\centering
\caption{EVM in 2 interferer case, Nakagami-$m$ fading desired channel, Rayleigh fading interferer channels, SIR max. SC rule}
\label{fig:fig1}
\end{figure}
\fi

\ifCLASSOPTIONtwocolumn
\begin{figure}
\includegraphics[scale=0.65]{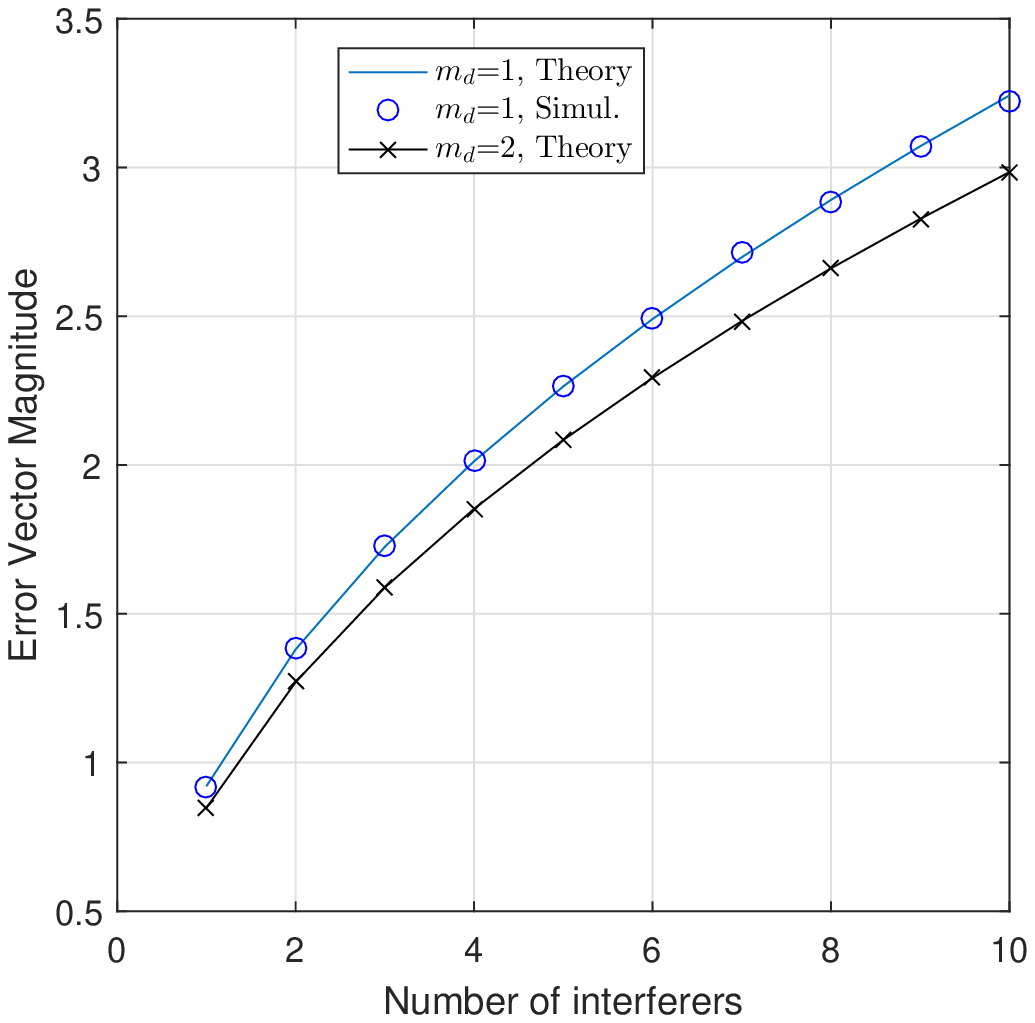}
\centering
\caption{EVM in 2 antenna case with Nakagami-$m$ fading desired and interferer channels, signal maximization SC rule}
\label{fig:fig3}
\end{figure}
\else
\begin{figure}
\includegraphics[scale=.95]{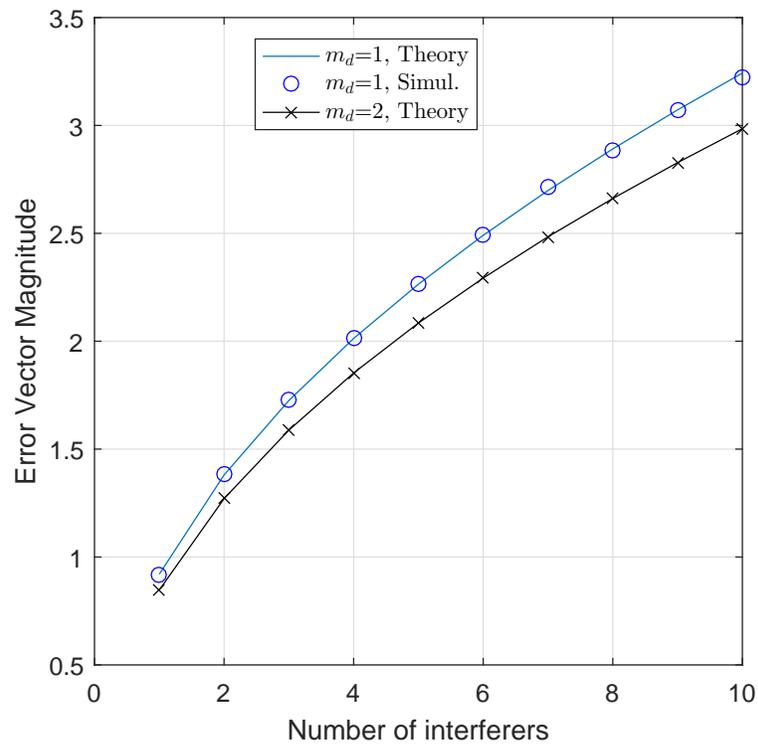}
\centering
\caption{EVM in 2 antenna case with Nakagami-$m$ fading desired and interferer channels, signal maximization SC rule}
\label{fig:fig3}
\end{figure}
\fi

\ifCLASSOPTIONtwocolumn
\begin{figure}
\includegraphics[scale=0.65]{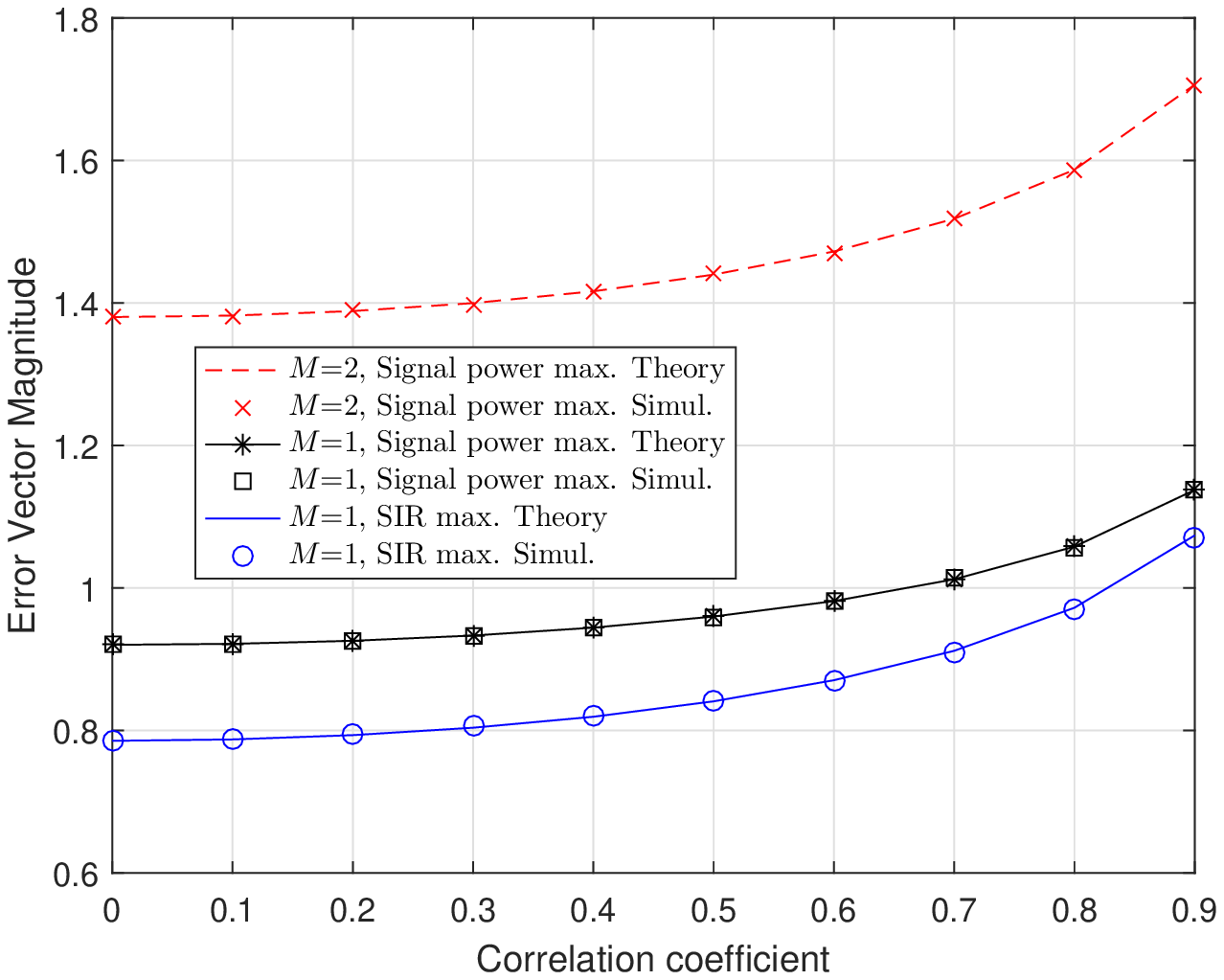}
\centering
\caption{EVM in 2 antenna case with Rayleigh fading in all the channels}
\label{fig:fig2}
\end{figure}
\else
\begin{figure}
\includegraphics[scale=.95]{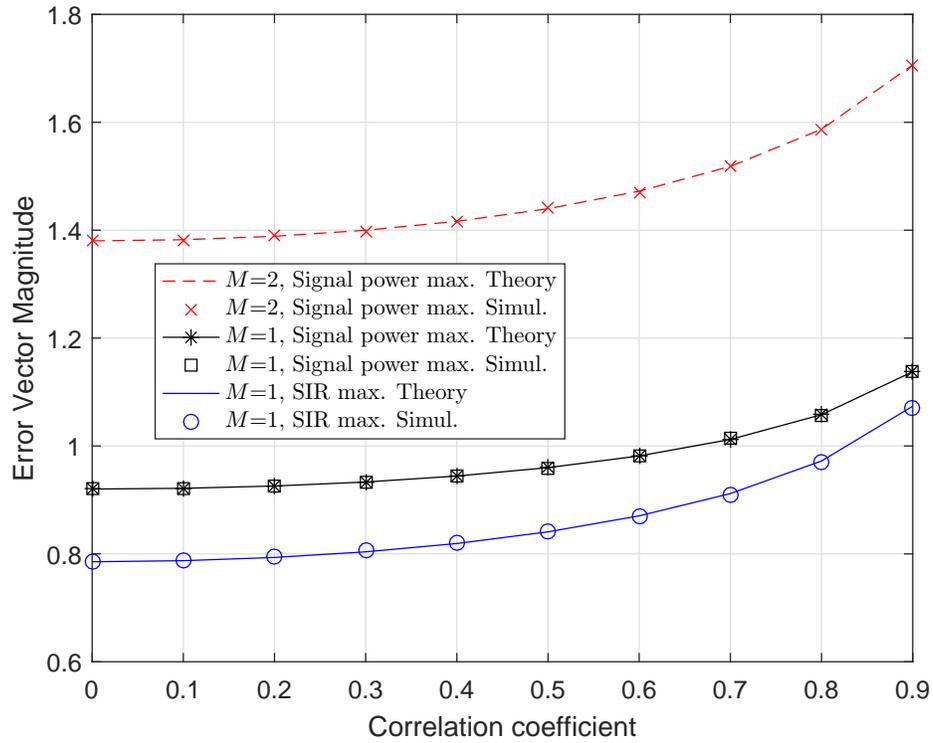}
\centering
\caption{EVM in 2 antenna case with Rayleigh fading in all the channels}
\label{fig:fig2}
\end{figure}
\fi

\section{Conclusion}
In this paper, we have derived the EVM  in an interference limited SIMO-SC system for arbitrary number of antennas and interferers when all the channels experience i.i.d Rayleigh fading. We have used a new approach to derive error vector magnitude and the expressions derived are in terms of elementary Gamma functions. We have compared the performance of two selection combining rules in which the best antenna is chosen by maximizing the signal to interference ratio  and by maximizing only the signal power respectively. We have shown that SIR maximizing performs better than signal power maximizing rule, \ie, the EVM due to SIR maximizing is lesser than signal power maximizing rule. We have also derived EVM when the desired channel experiences Nakagami-$m$ fading for some special cases.
We have also considered the case of correlated receive antennas and shown that EVM increases due to correlation.
The results have also been verified through simulation.

\bibliographystyle{IEEEtran}
\bibliography{References}
\end{document}